\documentclass[12pt]{article}
\usepackage[utf8]{inputenc}
\usepackage{graphicx}
\usepackage{amsmath}
\usepackage{amsfonts}
\usepackage{amssymb}
\usepackage{float}
\usepackage{subcaption}
\usepackage{euscript} 
\usepackage{ mathrsfs }
\usepackage{amsthm}
\usepackage{hyperref}
\usepackage{authblk}
\usepackage{xcolor, soul}
\usepackage{cleveref}

\usepackage{setspace}
\singlespacing
\newtheorem{theorem}{Theorem}[section]
\newtheorem{lemma}{Lemma}[section]

\theoremstyle{remark}

\newcommand{\be}{\begin{equation}}
\newcommand{\ee}{\end{equation}}
\newcommand{\beq}{\begin{eqnarray}}
\newcommand{\eeq}{\end{eqnarray}}
\newcommand{\beqn}{\begin{eqnarray*}}
\newcommand{\eeqn}{\end{eqnarray*}}
\def\bes{\begin{equation*}}
\def\ees{\end{equation*}}

\title{A greedy algorithm for habit formation under multiplicative utility}
\author[1]{S. Kirusheva\footnote{skirusheva@gmail.com}}
\author[2]{T. S. Salisbury\footnote{salt@yorku.ca}}
\affil[1,2]{York University}
\date{}
\begin{document}
\maketitle
\begin{abstract}
We consider the problem of optimizing lifetime consumption under a habit formation model, both with and without an exogenous pension.  Unlike much of the existing literature, we apply a power utility to the ratio of consumption to habit, rather than to their difference. The martingale/duality method becomes intractable in this setting, so we develop a greedy version of this method that is solvable using Monte Carlo simulation. We investigate the behaviour of the greedy solution, and explore what parameter values make the greedy solution a good approximation to the optimal one.
\end{abstract}
\section{Introduction}
\subsection{Overview}
We consider a retiree who wants to optimize their consumption in retirement. They evaluate the utility of current consumption in relation to past consumption or \emph{habit}, in other words, their utility function depends not just on current consumption, but also on habit -- an exponentially weighted average of past consumption. 

There is a large habit formation literature, but almost all of this assumes an additive relationship between habit $H_t$ and current consumption $C_t$. Often this means applying a power law utility $u(x)=\frac{x^{1-\gamma}}{1-\gamma}$ to the difference of the two, ie $u(C_t-H_t)$. This choice leads to some elegant simplifications, but also some unrealistic consequences. For example, if risk aversion $\gamma$ is $>1$, then we must always have $C_t>H_t$. Therefore consumption always rises over time. 

We therefore consider an alternate multiplicative form for utility, due to \cite{Rogers}, in which the power law utility function is applied to the ratio $\frac{C_t}{H_t}$. This seems to us to be economically more natural. Also, in the absence of the simplifications arising in the special case of additive utility, solving this problem brings us closer to being able to treat more general and therefore more realistic formulations. 

\cite{KHS} took steps in this direction, using the classic value function methodology and PDE's. But the more realistically one formulates a problem, the higher the dimensionality of the PDE becomes. This issue already appeared in \cite{KHS} where the presence of exogenous pension income removed the possibility of dimension reduction via scale invariance. To introduce further realism could easily tip such an approach over the edge to infeasibility. 

We therefore wanted to explore alternative approaches to this problem, that are less affected by the curse of dimensionality. A natural candidate is the martingale or duality method, which is the subject of this paper. And which, in our computations, will rely on Monte Carlo simulation rather than PDE's. 

Duality works well for additive habit formation, but it runs into significant problems in the multiplicative case. We therefore formulate a very similar problem, for which duality does work nicely, and for which it is reasonable to expect that the two solutions would be close. In other words, we propose an approximate solution to the original optimization problem. We call our solution the \emph{Greedy Optimum}. We will analyze its properties and will show that there are non-trivial choices of the model parameters for which it provides a good approximation. There are other choices for which it performs poorly, and exploring which is which is a main goal of this research.

\subsection{Literature review and agenda}
Many articles have been written about habit formation and optimal consumption. Almost all of these involve additive utility, as discussed above. This allows clever simplifications, starting with the pioneering work of \cite{C} and \cite{DZ}. The literature has explored many alternative choices for portfolio dynamics, labour income, utility, etc. We will focus particularly on papers that use the martingale/duality approach to these problems. 

Several sources focus (as we do)  particularly on retirement, mortality, or lifecycle planning. For example \cite{B} does so in a habit formation model incorporating stochastic wages and labor supply flexibility. \cite{LLYW} allows non-exponential discounting and the purchase of insurance. See also \cite{JP} or \cite{Rei}

Other works focus more on financial questions. For example, \cite{KLSX} allows an incomplete market containing a bond and multiple stocks. \cite{LWZ} decomposes consumption into both habitual and non-habitual components. See also \cite{CA}, \cite{CH}, \cite{DK}, \cite{NE}, \cite{HKW}, \cite{KLSS}, \cite{KLS}, \cite{CM}, \cite{VBG}, \cite{Yu} and \cite{XYu}. For some very recent work, see \cite{ABY} or \cite{HHJ}.

This paper is organized as follows.  Section \ref{COP_HFM} formulates the optimal and greedy approaches, and characterizes the greedy solutions analytically. Section \ref{S_NR} gives our numerical results, showing the behaviour of the greedy solutions. In Section \ref{comp_P_M} we consider the accuracy of the Greedy approximation, comparing with exact results from  \cite{KHS} obtained using the value function/PDE approach. Some simulations of how wealth and consumption actually evolve over time are given in Section \ref{WD}.  

\noindent{\bf Acknowledgement:} This work forms part of the Ph.D. Thesis of Kirusheva, at York University. Salisbury's research is supported, in part, by NSERC. This paper is dedicated to the memory of Tom Hurd, a friend and valued colleague of Salisbury for 34 years. You are sorely missed.
\section{Habit formation: global and greedy optima}
\label{COP_HFM}
In this section we will formulate the global optimization problem and its greedy counterpart, and we will show how to solve the latter using Monte Carlo methods. 
\subsection{Notation}
We start with a stock price process $S_t$ following a geometric Brownian motion
\bes
dS_t=\mu S_t\,dt + \sigma S_t\,dW_t
\ees
where $W_t$ is a Brownian motion under $P$. We will always work relative to the filtration $\EuScript{F}_t$ of $B_t$. Our controls will be consumption $C_t$ and asset allocation $\theta_t$, so our wealth dynamics are that
\beq
\label{wealth_dynamics}
dX_t=[\theta_t(\mu-r)+r]X_t \,dt+\theta_t \sigma X_t  dW_t-C_t \,dt +\pi\,dt.
\eeq
Here $r$ is the risk-free rate and $\pi$ is the rate of exogenous pension income (which may $=0$). 

Let 
\bes
u(c)=\frac{c^{1-\gamma}}{1-\gamma}
\ees
be the power law or CRRA utility associated to a risk-aversion $\gamma>0$, $\gamma\neq 1$. It will be convenient to consider a general habit stream $H_t$, and to formulate the lifetime utility of the pair $(C_t, H_t)$ as 
\be
\EuScript{U}(C,H)=E\left[\int_0^\infty e^{-\rho s}{}_sp_x u\left(\frac{C_s}{H_s}\right)\, ds\right],
\label{f1_g_vs_o}
\ee
where $\rho$ is the subjective discount rate, ${}_sp_x=e^{-\int_0^s\lambda_{x+s}}\,ds$ is the probability of an individual of age $x$ surviving $s$ years, and $\lambda_y$ is the hazard rate at age $y$. In computations we will typically assume that the hazard rate takes the Gompertz form, ie $\lambda_y=\frac{1}{b}e^{\frac{y-m}{b}}$, where $m$ is the modal value of life (see p47, \cite{M}), and $b$ is a dispersion coefficient. In this case ${}_sp_x=e^{-e^{\frac{x-m}{b}}(e^{\frac{s}{b}}-1)}$.

The particular habit stream generated by a consumption stream $C_t$ and an initial habit $\bar c$ will be denoted $\mathscr{H}(C, \bar c)$. In other words, $H_\cdot=\mathscr{H}(C, \bar c)$ means that
\begin{align}
dH_t&=\eta (C_t-H_t)\,dt\\
H_0&=\bar c
\label{3eq_1}
\end{align}
Here $\eta$ is a parameter of the model, that represents how fast the client's habit reacts to changes in consumption.

Let $\kappa=\frac{\mu-r}{\sigma}$, so that $\zeta_t=e^{-rt}e^{-\kappa W_t-\frac{\kappa^2}{2}t }$ represents the state price density. In other words, $\zeta_t=e^{-rt}\xi_t$, where $\xi_t$ is the Randon Nikodym derivative of the risk neutral measure $Q$ with respect to $P$. We will also need the notation $\tilde \zeta^t_s=e^{-r(s-t)}e^{-\kappa \tilde W_{s-t}-\frac{\kappa^2}{2}(s-t) }$, where $\tilde W_q=W_{t+q}-W_t$. In particular, $\zeta_s=\zeta_t\tilde\zeta^t_s$ for $s>t$, and the conditional distribution of $\tilde\zeta^t_{s}$ given $\EuScript{F}_t$ is the same as the unconditioned distribution of $\zeta_{s-t}$. 

Let $v$ denote initial wealth, and assume now that $\pi=0$. At various points we will impose one or another of the following conditions. The \emph{budget constraint} for a consumption stream $C_t\ge 0$ is the following:
\be
E\left[\int_0^\infty \zeta_s C_s\,ds\right]\le v.
\label{BudgetConstraint}
\ee
The \emph{exact budget constraint} is likewise, but with equality. 
Alternatively, a pair $(\theta_t, C_t)$ is said to be \emph{admissible} if each is adapted, $X_0=v$, and for every $t$ both $C_t\ge 0$ and $X_t\ge 0$. The connection between these is as follows: 
\begin{lemma} Let $C_t\ge 0$ be adapted. It satisfies the budget constraint $\Leftrightarrow$ there is a $\theta_t$ for which $(\theta_t,C_t)$ is admissible. 
\label{admissibility}
\end{lemma}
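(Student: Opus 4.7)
The plan is to leverage the standard martingale/duality correspondence for complete markets driven by a single Brownian motion: one direction comes from the supermartingale property of discounted wealth plus cumulative discounted consumption, and the other from the martingale representation theorem. Since $\pi=0$, the stated wealth SDE is linear in $X_t$, which makes the change-of-variable to $\zeta_t X_t$ particularly clean.

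First I would establish, via It\^o's formula, that for any admissible $(\theta_t, C_t)$,
\[
d(\zeta_t X_t)=-\zeta_t C_t\,dt+(\theta_t\sigma-\kappa)\zeta_t X_t\,dW_t.
\]
The crucial cancellation of the drift rests on $\kappa\sigma=\mu-r$, which is why $\zeta_t$ is the state-price density. From this, the process $\zeta_t X_t+\int_0^t\zeta_s C_s\,ds$ is a nonnegative local martingale and thus a supermartingale. Taking expectations at time $t$, dropping the nonnegative $E[\zeta_tX_t]$, and letting $t\to\infty$ via monotone convergence yields the budget constraint $E[\int_0^\infty\zeta_s C_s\,ds]\le v$. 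This takes care of the ``$\Leftarrow$'' direction of the lemma (admissibility implies budget).

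For the ``$\Rightarrow$'' direction, let $B=E[\int_0^\infty\zeta_sC_s\,ds]\le v$ and consider the $P$-martingale
\[
N_t=E\!\left[\int_0^\infty\zeta_sC_s\,ds\,\Big|\,\EuScript{F}_t\right].
\]
Since $\EuScript{F}_t$ is the Brownian filtration, the martingale representation theorem gives an adapted $\phi_t$ with $dN_t=\phi_t\,dW_t$. I would then define
\[
X_t=\zeta_t^{-1}\!\left(N_t-\int_0^t\zeta_sC_s\,ds\right)+(v-B)e^{rt},
\]
which is nonnegative (the bracketed term equals $\zeta_t^{-1}E[\int_t^\infty\zeta_sC_s\,ds\mid\EuScript{F}_t]\ge 0$, and $v-B\ge 0$) and satisfies $X_0=v$. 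To recover $\theta_t$, I would compute $d(\zeta_tX_t)$ from this definition, getting $dN_t-\zeta_tC_t\,dt=\phi_t\,dW_t-\zeta_tC_t\,dt$ plus the contribution $(v-B)\,d(\zeta_te^{rt})=-\kappa(v-B)\zeta_te^{rt}\,dW_t$, and then match with the generic formula from Step~1 to solve for $\theta_t$ in terms of $\phi_t$, $\zeta_t$ and $X_t$.

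The main obstacle, and the only subtle point, is handling $\{X_t=0\}$, where the formula for $\theta_t$ would divide by zero. The resolution is that on this set the nonnegative supermartingale $\zeta_tX_t$ must remain at zero, forcing $C_s=0$ for a.e.\ $s>t$ (almost surely on the event), so one may set $\theta_t=0$ there without disturbing admissibility. The extra bond holding $(v-B)e^{rt}$ is just a cash reserve invested at the risk-free rate; its contribution to the wealth dynamics is exactly $r(v-B)e^{rt}\,dt$, which is consistent with the prescribed SDE when $\theta_t$ is chosen to apply only to the risky portion of $X_t$. Putting everything together produces an admissible pair $(\theta_t,C_t)$ with the given $C_t$, completing the equivalence.
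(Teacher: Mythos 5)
Your proof is correct and follows essentially the same route as the paper: the forward direction via It\^o's formula applied to $\zeta_tX_t+\int_0^t\zeta_sC_s\,ds$ (your explicit local-martingale/supermartingale step is in fact slightly more careful than the paper's), and the converse via the conditional-expectation definition of wealth plus the martingale representation theorem, with the same treatment of the set $\{X_t=0\}$. The only difference is cosmetic: you park the budget slack $v-B$ in the bond as $(v-B)e^{rt}$, while the paper adds the constant $\delta=v-B$ to $\zeta_tX_t$ (equivalently invests the surplus at the ratio $\kappa/\sigma$); both choices keep $M_t$ a martingale and yield an admissible pair.
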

\begin{proof} This is a familiar result, but we will sketch the argument, for later use. First assume that $(\theta_t,C_t)$ is admissible. Set $M_t=\zeta_tX_t+\int_0^t\zeta_sC_s\,ds$. An application of Ito's formula shows that 
\be
dM_t=\zeta_tX_t(\sigma\theta_t-\kappa)\,dW_t,
\label{portfolio}
\ee
so $M_t$ is a martingale with $M_0=v$. Since $X_t\ge 0$, we may take expectations to obtain \eqref{BudgetConstraint}. Conversely, assume \eqref{BudgetConstraint} and let $\delta=v-E[\int_0^\infty \zeta_s C_s\,ds]\ge 0$. Define $X_t$ by 
\bes
\zeta_tX_t=\delta+E[\int_t^\infty\zeta_s C_s\,ds\mid\EuScript{F}_t].
\ees
Note that $X_t\ge 0$, since $C_t\ge 0$. Then $M_t=\zeta_tX_t+\int_0^t\zeta_sC_s\,ds$ is easily seen to be a martingale, and the martingale representation theorem may now be used to find $\theta_t$ so that \eqref{portfolio} holds, at least while $X_t>0$. And if ever $X_t=0$ then by definition, $C_s=0$ and $X_s=0$ for a.e. $s>t$.

It is now simple to undo the use of Ito's formula to see that $(\theta_t,C_t)$ is admissible.
\end{proof}
For later use, observe that if the exact budget constraint holds, the argument shows that we may obtain a portfolio process via
\be
\zeta_tX_t=E[\int_t^\infty\zeta_s C_s\,ds\mid\EuScript{F}_t].
\label{portfolioprocess}
\ee

\subsection{Optimal solution vs. greedy heuristic solution: no pension case}
\label{OP_vs_GH}
In this section we assume that the exogenous pension $\pi=0$. 
We may now formulate the global optimization problem, and its greedy counterpart. Recall that a greedy algorithm is one that optimizes some local quantity, without taking into account how that choice may affect other quantities. Sometimes a greedy algorithm may be globally optimal, but often it is not. It is however usually simpler to compute. 
\medskip

\noindent\emph{Globally optimal formulation:}

In this version, we wish to optimize $\EuScript{U}(C,H)$ subject to the following constraints:
\begin{itemize}
\item $(\theta_t,C_t)$ is admissible,
\item  $H_\cdot=\mathscr{H}(C, \bar c)$
\end{itemize}

Lemma \ref{admissibility} can be used to show that this is equivalent to maximizing  $\EuScript{U}(C,H)$ over adapted $C_t\ge 0$ satisfying the budget constraint and for which $H_\cdot=\mathscr{H}(C, \bar c)$. 
\medskip

\noindent\emph{Greedy formulation:}

We seek $C^*_t$ satisfying the exact budget constraint, such that
if we set $H_\cdot=\mathscr{H}(C^*, \bar c)$, then $C=C^*$ maximizes $\EuScript{U}(C,H)$ over adapted $C_t\ge 0$ satisfying the budget constraint \eqref{BudgetConstraint}. 
\medskip

In other words, modifying this $C^*_t$ will not improve utility directly, but might do so indirectly, through modifying habit. 

We say that such a consumption stream $C^*_t$ is \emph{locally optimal}, or a \emph{greedy optimum}. For very small values of $\eta$, habit should not be relevant, so both versions should be well approximated by the classic Merton problem. But we can hope that for slightly larger values of $\eta$ the greedy solution will still provide a good approximation to the global optimum, even when the Merton approximation ceases to be good.

We will obtain the greedy optimum numerically, by solving certain equations. The following verification theorem exhibits these equations, and shows that their solution will indeed yield a greedy optimum. 

We will see that for any $\alpha$, (a) and (b) below can be solved, following which the expectation in \eqref{BudgetConstraint} can be computed by numerical integration combined with Monte Carlo simulation. This gives a value that depends on $\alpha$, but then $\alpha$ can be adjusted, and the process repeated, until a solution is found for which the exact budget constraint is satisfied. 

\begin{theorem}
\label{lem}
Suppose there exists an adapted consumption stream $C^*_t\ge 0$ and a Lagrange multiplier $\alpha>0$ such that the following conditions hold: 
\begin{enumerate}
\item $C^*_t=H_t^{1-\frac{1}{\gamma}}\left(\alpha e^{\rho t}{}_tp_x^{-1}\zeta_t\right)^{-\frac{1}{\gamma}}$,
\item $H_\cdot=\mathscr{H}(C^*, \bar c)$, and 
\item $C^*_t$ satisfies the exact budget constraint.
\end{enumerate}
Then this consumption stream $C^*_t $ is a greedy optimum.
\end{theorem}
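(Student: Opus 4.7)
The plan is to treat this as a standard Lagrangian/concavity verification. Because $H = \mathscr{H}(C^*, \bar c)$ is frozen by condition (b), the greedy objective $C \mapsto \mathscr{U}(C, H)$ is a concave functional of $C$ on the convex set of adapted $C_t \ge 0$ obeying the budget constraint. Condition (a) should be exactly the first-order (KKT) condition for the Lagrangian
\[
\mathscr{U}(C,H) - \alpha\Bigl(E\Bigl[\int_0^\infty \zeta_s C_s\,ds\Bigr] - v\Bigr),
\]
and once I verify this algebraically, global concavity turns the first-order condition into a sufficient condition.

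Concretely, first I would fix $H$ and define the pointwise integrand $\tilde u_s(c) = e^{-\rho s}\,{}_sp_x\,u(c/H_s)$. Since $u''(x) = -\gamma x^{-\gamma - 1} < 0$ for all $\gamma > 0$, $\tilde u_s$ is concave in $c$, so for any admissible $C$,
\[
\tilde u_s(C_s) - \tilde u_s(C^*_s) \le \tilde u_s'(C^*_s)(C_s - C^*_s).
\]
A direct computation using condition (a) gives $\tilde u_s'(C^*_s) = e^{-\rho s}{}_sp_x H_s^{-1}(C^*_s/H_s)^{-\gamma} = \alpha \zeta_s$. Integrating over $s$, taking expectations (invoking Fubini), and then using the budget constraint on $C$ together with the exact budget constraint on $C^*$ from condition (c), we obtain
\[
\mathscr{U}(C,H) - \mathscr{U}(C^*,H) \le \alpha E\Bigl[\int_0^\infty \zeta_s(C_s - C^*_s)\,ds\Bigr] \le \alpha(v - v) = 0,
\]
which is exactly the greedy optimality of $C^*$.

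The main obstacle, and essentially the only place care is needed, is integrability and the justification of Fubini/expectation-swap, since for $\gamma > 1$ the utility $u$ is unbounded below and for $\gamma < 1$ it is unbounded above. My plan to handle this is to consider two cases. If $\gamma > 1$, $u \le 0$, so $\mathscr{U}(C,H)$ is always well-defined in $[-\infty,0]$; if $\mathscr{U}(C,H) = -\infty$ the inequality is vacuous, and otherwise I can apply Fubini to the nonpositive integrand $\tilde u_s$ without issue, and separately to the signed term $\alpha\zeta_s(C_s - C^*_s)$ after splitting into positive and negative parts (both of which have finite expectation by the budget constraints). If $\gamma < 1$, $u \ge 0$ and the analogous split works with the roles reversed; finiteness of $\mathscr{U}(C^*,H)$ can be assumed without loss of generality, since otherwise the claim is trivial. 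With these technical points dispatched, the concavity argument delivers the result.
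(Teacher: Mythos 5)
Your proof is correct and takes essentially the same route as the paper's: with $H$ frozen by (b), you verify that (a) is exactly the first-order condition for the pointwise Lagrangian $c\mapsto e^{-\rho s}{}_sp_x\,u(c/H_s)-\alpha\zeta_s c$, and concavity plus the budget constraint on $C$ and the exact budget constraint on $C^*$ yield $\EuScript{U}(C,H)\le\EuScript{U}(C^*,H)$. The only difference is that you make explicit the Fubini/integrability justification (splitting by the sign of $u$ according to $\gamma\gtrless 1$), which the paper's ``simple calculus'' step leaves implicit.
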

\begin{proof}
Let $C_s$ be any adapted consumption stream which satisfies the budget constraint (\ref{BudgetConstraint}) 
with $C_s\ge0 \;\forall s.$ First, consider the problem of finding $C_t$ to maximize the following quantity, with $H_\cdot$ fixed:
\be
\int_0^\infty e^{-\rho s}{}_sp_xu\left(\frac{C_s}{H_s}\right)\,ds \quad {\rm s.t.}\quad \int_0^\infty \zeta_s C_s\, ds=v.
\label{pathwise_equation}
\ee
We use the method of Lagrange multipliers. Fix $\alpha>0$. Then (a) and simple calculus shows that
\begin{multline}
\int_0^\infty e^{-\rho s}{}_sp_xu\left(\frac{C_s}{H_s}\right)\,ds-\alpha\left(\int_0^\infty \zeta_s C_s\, ds-v\right)\\
\le \int_0^\infty e^{-\rho s}{}_sp_xu\left(\frac{C^*_s}{H_s}\right)\,ds-\alpha\left(\int_0^\infty \zeta_s C^*_s\, ds-v\right).
\end{multline}
Taking expectations, we see that
\begin{multline}
\EuScript{U}(C,H)-\alpha\left(E\left[\int_0^\infty \zeta_s C_s\,ds\right]-v\right)  \\
\qquad\le  \EuScript{U}(C^*,H)-\alpha\left(E\left[\int_0^\infty \zeta_s C^*_s\,ds\right]-v\right)
=\EuScript{U}(C^*,H)
\end{multline}
because $C^*$ satisfies the exact budget constraint. Therefore, if $C$  satisfies \eqref{BudgetConstraint}, we  conclude that  $\EuScript{U}(C,H)\le \EuScript{U}(C^*,H)$, as required.
\end{proof}

This argument also shows why the global optimum is not amenable to solution via straightforward duality methods. Namely that when $H_t$ is not held fixed when applying Lagrange multipliers, problematic terms such as $\frac{\partial H_t}{\partial C_s}$ will arise for $t>s$ when differentiating the first expression in \eqref{pathwise_equation}.

An interesting observation is that the above allows us solve for $C^*_t$ and $H_t$ (in terms of the observable state-price-density $\zeta_t$), without needing to also find the account value process $X_t$ or the asset allocation process $\theta_t$. As we will see below, these latter quantities are in fact both computable, though at the expense of significantly more effort than goes into the computations of $C^*_t$ and $H_t$. 

As stated above, if we fixe $\alpha$ then (a) and (b) can be solved. In fact, there is an explicit formula for $H_t$, which we now proceed to describe, from which (a) gives us $c^*_t$. It turns out that this will not be feasible once we introduce pension income, which is one reason why the formulas of this section are somewhat simpler than those we will derive later. 

Combining (a) and (b) of Theorem \ref{lem} gives the following differential equation:
\bes
H'_s + \eta H_s-\eta(\alpha\zeta_se^{\rho s}{}_sp_x^{-1})^{-\frac{1}{\gamma}}H_s^k=0
\ees
where $k=1-\frac{1}{\gamma}$. This is a Bernoulli equation, which can be solved with a standard substitution, to give
\bes
H_s=e^{-\eta s}\left(\frac{\eta}{\gamma} \int_0^s (\alpha\zeta_q e^{(\rho-\eta) q}{}_qp_x^{-1})^{-\frac{1}{\gamma}} \,dq+\bar c^{\frac{1}{\gamma}}\right)^\gamma.
\ees
More generally, given $H_t$ we can find $H_s$ for $s>t$ as
\beq
H_s=e^{-\eta s}\left(\frac{\eta}{\gamma} \int_t^s (\alpha\zeta_q e^{(\rho-\eta) q}{}_qp_x^{-1})^{-\frac{1}{\gamma}}\,dq+H_t^{\frac{1}{\gamma}}e^{\frac{\eta t}{\gamma}}\right)^\gamma.
\label{4eq_18}
\eeq
This will imply the following expressions for $X_t$ and $\theta_t$.
\begin{theorem}
\label{lemma_wo_pension}
Consider a greedy optimum $(C^*_t,H_t)$ as given in Theorem \ref{lem}. 
Then the wealth process $X_t$ has the form
\be
X_t=\zeta_t^{-1}F(t,Z_t) \quad {\it where} \; Z_t=\zeta_t H_t
\label{4eq_29}
\ee
and
\begin{multline}
F(t,z)=\alpha^{-\frac{1}{\gamma}}E\left[\int_t^\infty (\tilde \zeta^t_s)^{1-\frac{1}{\gamma}}e^{\frac{-\rho s}{\gamma}}{}_sp_x^{\frac{1}{\gamma}}\right. \\
\left. \cdot\left(\frac{\eta}{\gamma}\int_t^s\left(\alpha\tilde\zeta^t_qe^{\rho q}e^{\eta(s-q)}{}_qp_x\right)^{-\frac{1}{\gamma}}\,dq+z^{\frac{1}{\gamma}}e^{-\frac{\eta (s- t)}{\gamma}}\right)^{\gamma-1}\,ds \right]
\label{4eq_28}
\end{multline}
Moreover the portfolio process 
\beq
\theta_t=\frac{\kappa}{\sigma}\left( 1-\frac{Z_t F_z(t,Z_t)}{F(t,Z_t)}\right).
\label{thetaformula}
\eeq
makes $(\theta_t,C^*_t)$ admissible.
\end{theorem}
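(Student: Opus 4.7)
The plan is to first identify $X_t$ using the formula \eqref{portfolioprocess} for the wealth process, and then extract the asset allocation $\theta_t$ by matching the diffusion coefficient of $\zeta_tX_t$ against \eqref{portfolio}.

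First I would observe that since $C^*_t$ satisfies the exact budget constraint by hypothesis (c), the computation in the proof of Lemma \ref{admissibility} gives
\[
\zeta_t X_t = E\Bigl[\int_t^\infty \zeta_s C^*_s\,ds\,\Bigm|\,\EuScript{F}_t\Bigr].
\]
Substituting (a) of Theorem \ref{lem} yields $\zeta_s C^*_s=\alpha^{-1/\gamma}\zeta_s^{1-1/\gamma}e^{-\rho s/\gamma}{}_sp_x^{1/\gamma}H_s^{1-1/\gamma}$, and using \eqref{4eq_18} (with starting time shifted from $0$ to $t$) expresses $H_s^{1-1/\gamma}$ as $e^{-\eta s(1-1/\gamma)}$ times the $(\gamma-1)$-th power of an expression built from $\zeta_q$ for $q\in[t,s]$ and from $H_t$. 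To reduce this to a function of $Z_t=\zeta_tH_t$, I would write $\zeta_s=\zeta_t\tilde\zeta^t_s$ and $\zeta_q=\zeta_t\tilde\zeta^t_q$. The key bookkeeping is that pulling $\zeta_t^{-1/\gamma}$ out of the inner parenthesis and raising to the power $\gamma-1$ produces $\zeta_t^{-(1-1/\gamma)}$, which cancels the external factor $\zeta_t^{1-1/\gamma}$, while the residual $H_t^{1/\gamma}$ combines with $\zeta_t^{1/\gamma}$ into $Z_t^{1/\gamma}$. Since $\tilde\zeta^t_\cdot$ is independent of $\EuScript{F}_t$ with the same law as $\zeta_{\cdot-t}$, the conditional expectation collapses to an unconditional expectation depending on $\EuScript{F}_t$ only through $Z_t$; this yields $\zeta_tX_t=F(t,Z_t)$ with $F$ as in \eqref{4eq_28}.

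For the portfolio, I would compute the dynamics of $Z_t$ from $d\zeta_t=-r\zeta_t\,dt-\kappa\zeta_t\,dW_t$ and $dH_t=\eta(C^*_t-H_t)\,dt$, obtaining
\[
dZ_t = Z_t\Bigl[\eta\Bigl(\frac{C^*_t}{H_t}-1\Bigr)-r\Bigr]\,dt - \kappa Z_t\,dW_t.
\]
Applying It\^o's formula to $F(t,Z_t)$ then produces a diffusion coefficient of $-\kappa Z_t F_z(t,Z_t)$. Because $M_t=\zeta_tX_t+\int_0^t\zeta_sC^*_s\,ds=F(t,Z_t)+\int_0^t\zeta_sC^*_s\,ds$ is a martingale, its drift must vanish and, by comparison with \eqref{portfolio} in the proof of Lemma \ref{admissibility}, its diffusion must equal $\zeta_tX_t(\sigma\theta_t-\kappa)=F(t,Z_t)(\sigma\theta_t-\kappa)$. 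Equating the two diffusion expressions and solving for $\theta_t$ immediately gives \eqref{thetaformula}. Admissibility of the resulting $(\theta_t,C^*_t)$ follows as in Lemma \ref{admissibility}: $C^*_t\ge 0$ by hypothesis, $X_t\ge 0$ as a conditional expectation of a non-negative integrand, and $X_0=v$ by the exact budget constraint.

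The hard part will be the algebraic collapse in the first step — verifying in careful detail that every appearance of $\zeta_t$ and $H_t$ combines exactly into $Z_t$, including the somewhat delicate cancellation produced by raising the inner parenthesis of \eqref{4eq_18} to the power $\gamma-1$. Once that reduction is in hand, the portfolio formula is a routine It\^o computation combined with the martingale representation machinery already set up in Lemma \ref{admissibility}.
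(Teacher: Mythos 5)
Your proposal is correct and follows essentially the same route as the paper: substitute (a) and the shifted Bernoulli solution \eqref{4eq_18} into $E[\int_t^\infty\zeta_sC^*_s\,ds\mid\EuScript{F}_t]$, factor $\zeta_s=\zeta_t\tilde\zeta^t_s$ so the $\zeta_t$ and $H_t$ powers collapse into $Z_t$, invoke independence of $\tilde\zeta^t_\cdot$ from $\EuScript{F}_t$, and then read off $\theta_t$ by matching the It\^o diffusion coefficient $-\kappa Z_tF_z$ of $F(t,Z_t)$ against $F(t,Z_t)(\sigma\theta_t-\kappa)$ from \eqref{portfolio}. The only difference is cosmetic: you write out the $dZ_t$ dynamics explicitly, which the paper leaves implicit.
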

\begin{proof} Since $\zeta_s=\zeta_t\tilde\zeta^t_s$, 
\begin{align*}
&\int_t^\infty \zeta_s C^*_s\,ds 
= \int_t^\infty \zeta_s H_s^{1-\frac{1}{\gamma}}\left(\alpha e^{\rho s}{}_sp_x^{-1}\zeta_s\right)^{-\frac{1}{\gamma}}\,ds\\
&\qquad= \int_t^\infty \zeta_s \left[e^{-\eta s}\left(\frac{\eta}{\gamma} \int_t^s (\alpha\zeta_q e^{(\rho-\eta) q}{}_qp_x^{-1})^{-\frac{1}{\gamma}} \,dq+H_t^{\frac{1}{\gamma}}e^{\frac{\eta t}{\gamma}}\right)^\gamma\right]^{1-\frac{1}{\gamma}} \left(\alpha e^{\rho s}{}_sp_x^{-1}\zeta_s\right)^{-\frac{1}{\gamma}}\,ds\\
&\qquad= \alpha^{-\frac{1}{\gamma}}\int_t^\infty \tilde (\zeta^t_s)^{1-\frac{1}{\gamma}} e^{-\frac{\rho s}{\gamma}}{}_sp_x^{\frac{1}{\gamma}} \left[\frac{\eta}{\gamma} \int_t^s (\alpha\tilde \zeta^t_q e^{\rho q}e^{\eta(s-q)}{}_qp_x)^{-\frac{1}{\gamma}} \,dq+\zeta_t^{\frac{1}{\gamma}}H_t^{\frac{1}{\gamma}}e^{-\frac{\eta (s-t)}{\gamma}}\right]^{\gamma-1} \,ds.
\end{align*}
But $\tilde\zeta^t_s$ is independent of $\EuScript{F}_t$, so \eqref{4eq_29} follows by \eqref{portfolioprocess}.

Recall from Lemma \ref{admissibility} that $M_t=\zeta_t X_t+\int_0^t \zeta_s C_s^*\, ds$ is a martingale. By \eqref{portfolio} we have that
\bes
dM_t=F(t,Z_t)(\sigma\theta_t-\kappa)\,dW_t.
\ees
But we may also substitute $F(t,Z_t)$ for $\zeta_tX_t$ and apply Ito's formula to obtain that
\bes
dM_t=-\kappa Z_tF_z(t,Z_t)\,dW_t.
\ees
Formula \eqref{thetaformula} now follows. 
\end{proof}
If we check the last formula for the case $\eta=0$  we will get $\theta_t=\frac{\kappa}{\sigma \gamma},$ which coincides with the solution of the Merton problem.

In any case, if we simulate a path for the stock $S_t$ (or equivalently, a path for $\zeta_t$), we can then compute the corresponding paths for $C_t$, $H_t$, $X_t$, and $\theta_t$. Of course, finding $C_t$ and $H_t$ is less expensive than finding $X_t$ and $\theta_t$, because once the Monte Carlo runs to find $\alpha$ are finished, we will have the entire paths of the former,  whereas the latter will require additional runs for each $t$ considered, in order to compute $F$ and $F_z$.

As a check on our results, we compared $X_t$, computed as above, with $X_t$ computed using an Euler-Maruyama discretization. Of course, this requires $\theta_t$, so that quantity was still computed via Monte Carlo simulation. Good agreement was obtained.

\subsection{Adding pension}
\label{HFMwith_pension}
We now want to explore the same ideas, in the case of a positive exogenous pension $\pi>0$. If we had taken a PDE approach, this would destroy scale invariance, meaning that a dimension reduction would no longer be available, rendering this case significantly harder to solve. With the martingale approach, we will see that some aspects become a bit more complicated, but not dramatically so. 

We will distinguish between consumption from wealth $C_t^{w}$ and total consumption $C_t=C_t^{w}+\pi$. Because we focus on the martingale approach, and so will apply Lemma \ref{admissibility} to $C^w_t$, we will need to impose a constraint that $C^w_t\ge 0$. Or equivalently, $C_t\ge\pi$. Whether this is economically natural or not will, of course, depend on the size of $\pi$. Admissibility and the budget constraint are as before, except now in relation to $C^w_t$. 

This means our two problems will be formulated as follows.
\medskip

\noindent\emph{Globally optimal formulation:}

\noindent Choose $\theta_t$ and $C_t$ to optimize $\EuScript{U}(C,H)$ subject to the following constraints:
\begin{itemize}
\item $C_t\ge\pi$ for every $t$,
\item $(\theta_t,C^w_t)$ is admissible,
\item  $H_\cdot=\mathscr{H}(C, \bar c)$
\end{itemize}
\medskip

\noindent\emph{Greedy formulation:}

\noindent We seek $C^*_t$ such that the following hold:
\begin{itemize}
\item $C^*_t\ge \pi$ for every $t$,
\item $C^{w,*}_t=C^*_t-\pi$ satisfies the exact budget constraint, and 
\item If we set $H_\cdot=\mathscr{H}(C^*, \bar c)$, then $C=C^*$ maximizes $\EuScript{U}(C,H)$ over adapted $C_t\ge \pi$ such that $C^w_t$ satisfies the budget constraint \eqref{BudgetConstraint}. 
\end{itemize}
\medskip

\begin{theorem}
\label{theor}
Suppose there exists an adapted consumption stream $C^*_t$ and a Lagrange multiplier $\alpha>0$ such that the following conditions hold: 
\begin{enumerate}
\item $C^*_t=\pi\lor\left(H_t^{1-\frac{1}{\gamma}}\left(\alpha e^{\rho t}{}_tp_x^{-1}\zeta_t\right)^{-\frac{1}{\gamma}} \right)$,
\item $H_\cdot=\mathscr{H}(C^*, \bar c)$, and 
\item $C^{w,*}_t$ satisfies the exact budget constraint.
\end{enumerate}
Then this consumption stream $C^*_t $ is a greedy optimum.
\end{theorem}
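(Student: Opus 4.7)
The plan is to follow the template of the proof of Theorem \ref{lem}, with the only new ingredient being the lower bound $C^*_t\ge\pi$, which arises because Lemma \ref{admissibility} and the budget constraint now concern the wealth-funded part $C^w_t=C_t-\pi\ge 0$ rather than total consumption. I would fix an arbitrary adapted $C_t\ge\pi$ whose $C^w_t$ satisfies \eqref{BudgetConstraint}, and freeze the habit path at $H_\cdot=\mathscr{H}(C^*,\bar c)$ from (b).

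The core step is the pathwise inequality: for each $s$, each $\omega$, and any $c\ge\pi$,
\begin{equation*}
e^{-\rho s}{}_sp_x u\!\left(\tfrac{c}{H_s}\right)-\alpha\zeta_s(c-\pi)\le e^{-\rho s}{}_sp_x u\!\left(\tfrac{C^*_s}{H_s}\right)-\alpha\zeta_s(C^*_s-\pi).
\end{equation*}
After dropping the common $+\alpha\zeta_s\pi$ from both sides, this reduces to showing that $c\mapsto e^{-\rho s}{}_sp_x u(c/H_s)-\alpha\zeta_s c$ is maximized over $c\ge\pi$ at $C^*_s$. This function is strictly concave (since $u$ is), and a first-order condition identical to the one used in the proof of Theorem \ref{lem} places its unconstrained interior maximizer at $\tilde C_s=H_s^{1-1/\gamma}(\alpha e^{\rho s}{}_sp_x^{-1}\zeta_s)^{-1/\gamma}$. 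By concavity, the constrained maximizer on $[\pi,\infty)$ is then $\pi\lor\tilde C_s$, which by (a) equals $C^*_s$.

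Next I would integrate the pathwise inequality in $s$ over $[0,\infty)$ and take expectations to obtain
\begin{equation*}
\EuScript{U}(C,H)-\alpha\!\left(E\!\left[\int_0^\infty \zeta_s C^w_s\,ds\right]-v\right)\le \EuScript{U}(C^*,H)-\alpha\!\left(E\!\left[\int_0^\infty \zeta_s C^{w,*}_s\,ds\right]-v\right)=\EuScript{U}(C^*,H),
\end{equation*}
where the final equality uses the exact budget constraint (c). Since $C^w$ satisfies \eqref{BudgetConstraint}, the bracket on the left is non-positive, so $\EuScript{U}(C,H)\le\EuScript{U}(C^*,H)$, which is greedy optimality.

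I do not expect a serious obstacle: the argument is a direct adaptation of Theorem \ref{lem}, the only twists being (i) the Lagrangian now pairs $\alpha\zeta_s$ with $C_s-\pi$ rather than with $C_s$, reflecting that the budget constraint now applies to wealth-funded consumption, and (ii) the interior optimum $\tilde C_s$ is truncated against the floor $\pi$, which is precisely what the $\pi\lor(\cdot)$ in (a) records. Existence of $\alpha$ and $C^*$ meeting (a)--(c) is assumed rather than proved, and in practice would be found by numerical search over $\alpha$, exactly as in the no-pension case.
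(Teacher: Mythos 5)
Your proposal is correct and follows essentially the same route as the paper: form the Lagrangian pairing $\alpha\zeta_s$ with $C^w_s=C_s-\pi$, maximize pathwise the concave integrand over $c\ge\pi$ to get the truncated first-order condition $\pi\lor\tilde C_s$ (equivalently $C^w_s=(\tilde C_s-\pi)_+$), then integrate, take expectations, and invoke the exact budget constraint exactly as in Theorem \ref{lem}. You merely spell out in more detail the concavity/truncation step that the paper summarizes as ``the first order condition is only binding when the bracketed expression is $>0$.''
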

\begin{proof}
Our Lagrangian now becomes
\bes
\int_0^\infty e^{-\rho s}{}_sp_xu\left(\frac{C^w_s+\pi}{H_s}\right)\,ds-\alpha\left(\int_0^\infty \zeta_s C^w_s\, ds-v\right).
\ees
and maximizing it leads to 
\be
C^w_s=\left(H_s^{1-\frac{1}{\gamma}}(\alpha e^{\rho s}{}_sp_x^{-1}\zeta_s)^{-\frac{1}{\gamma}}-\pi\right)_+.
\label{Cwformula}
\ee
Note that the first order condition is only binding when the bracketed expression is $>0$. The remainder of the argument is as in Theorem \ref{lem}.
\end{proof}

The $\lor$ in (a) complicates the analytic solution of (a) and (b), so in this case we will opt to solve them numerically using the Euler--Maruyma method. As before, we then iterate to find the $\alpha$ that makes (c) hold, using Monte Carlo simulation to compute the necessary expectations.

With pension, the expressions for $X_t$ and $\theta_t$ are somewhat more complicated, now depending on both $\zeta_t$ and $H_t$ rather than on their product.  They are still amenable to calculation via Monte Carlo simulation.
\begin{theorem}
\label{lemma_with_pension}
Consider a greedy optimum $(C^*_t,H_t)$ as given in Theorem \ref{theor}. 
Then the wealth process $X_t$ has the form
\be
X_t=G(t,\zeta_t,H_t) 
\label{4eq_29}
\ee
where
\be
G(t,y,h)=E\left[\int_t^\infty \tilde \zeta^t_s\left(H_s^{1-\frac{1}{\gamma}}(\alpha e^{\rho s}{}_sp_x^{-1}y\tilde\zeta^t_s)^{-\frac{1}{\gamma}}-\pi\right)_+ \,ds\mid H_t=h\right]
\label{3eq_60}
\ee
Moreover the portfolio process 
\bes
\theta_t=-\frac{\kappa\zeta_tG_y(t,\zeta_t,H_t)}{\sigma G(t,\zeta_t,H_t)}
\label{thetaformula}
\ees
makes $(\theta_t,C^{w,*}_t)$ admissible.
\end{theorem}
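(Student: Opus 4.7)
The plan is to mirror the proof of Theorem \ref{lemma_wo_pension}, though now carrying two state variables $(\zeta_t, H_t)$ rather than one. The positive-part truncation in (\ref{Cwformula}) destroys the scale invariance that previously allowed us to collapse everything into the single product $Z_t = \zeta_t H_t$, and this is the key technical difference driving the more complicated form of $G$. The representation step is the principal obstacle: one must check carefully that the conditional expectation appearing in the martingale formula depends on $\EuScript{F}_t$ only through $(\zeta_t, H_t)$.

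First, since hypothesis (c) asserts the exact budget constraint for $C^{w,*}$, applying (\ref{portfolioprocess}) to $C^{w,*}$ yields
\[
\zeta_t X_t = E\Bigl[\int_t^\infty \zeta_s C^{w,*}_s\,ds \;\Big|\; \EuScript{F}_t\Bigr].
\]
Writing $\zeta_s = \zeta_t \tilde\zeta^t_s$ and substituting (\ref{Cwformula}) produces an integrand depending on $\tilde\zeta^t_s$, the deterministic discount/survival factors, and $H_s$. For $s \ge t$ the habit $H_s$ is determined pathwise by the ODE (\ref{3eq_1}) starting from $H_t$, driven through $C^*$ by the path of $\tilde\zeta^t_\cdot$ together with the state $\zeta_t$. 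Since $\tilde\zeta^t_\cdot$ is independent of $\EuScript{F}_t$, the conditional expectation is a deterministic function of $(\zeta_t, H_t)$; substituting $(y,h)$ for $(\zeta_t, H_t)$ in the resulting expression gives formula (\ref{3eq_60}) and hence (\ref{4eq_29}).

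For the portfolio, I would again follow the pattern of Theorem \ref{lemma_wo_pension}. Lemma \ref{admissibility} applied to $C^{w,*}$ produces a $\theta_t$ for which $M_t = \zeta_t X_t + \int_0^t \zeta_s C^{w,*}_s\,ds$ is a martingale with dynamics $dM_t = \zeta_t X_t(\sigma\theta_t - \kappa)\,dW_t = \zeta_t G(t,\zeta_t,H_t)(\sigma\theta_t - \kappa)\,dW_t$ by (\ref{portfolio}). On the other hand, apply Ito's formula to $\zeta_t G(t,\zeta_t,H_t)$, using $d\zeta_t = -r\zeta_t\,dt - \kappa\zeta_t\,dW_t$ and the fact that $H_t$ has finite variation (so its only contribution to $dG$ is the drift $\eta(C^*_t - H_t)G_h\,dt$). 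The $dW_t$ coefficient of $d(\zeta_tG)$ then comes solely from $G\,d\zeta_t$ and the $G_y\,d\zeta_t$ piece of $\zeta_t\,dG$, giving $-\kappa(\zeta_t G + \zeta_t^2 G_y)$. Equating this with the known diffusion of $dM_t$ and dividing by $\sigma \zeta_t G$ yields the claimed formula, and admissibility follows by reversing the use of Ito's formula exactly as in the proof of Lemma \ref{admissibility}.
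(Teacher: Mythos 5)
Your proposal is correct and follows essentially the same route as the paper: represent $\zeta_tX_t$ via \eqref{portfolioprocess}, substitute \eqref{Cwformula}, factor $\zeta_s=\zeta_t\tilde\zeta^t_s$ and use independence of $\tilde\zeta^t_\cdot$ from $\EuScript{F}_t$ (with $H_s$ determined by the ODE from $H_t$ and, through $C^*$, by $\zeta_t$ and the path of $\tilde\zeta^t_\cdot$) to get $X_t=G(t,\zeta_t,H_t)$, then match the $dW_t$ coefficients of $dM_t$ computed two ways to obtain $\theta_t$. Your write-up is in fact slightly more explicit than the paper's about why both $\zeta_t$ and $H_t$ survive as state variables and about recovering admissibility by reversing the It\^o step.
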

\begin{proof}
From \eqref{portfolioprocess} we have that
\bes
\zeta_t X_t=E\left[\int_t^\infty \zeta_sC^{w,*}_s\,ds\mid\EuScript{F}_t\right].
\ees
Now substitute \eqref{Cwformula} and use that $\zeta_s=\zeta_t\tilde\zeta^t_s$ to obtain that
\bes
\zeta_tX_t=E\left[\int_t^\infty \zeta_t\tilde \zeta^t_s\left(H_s^{1-\frac{1}{\gamma}}(\alpha e^{\rho s}{}_sp_x^{-1}\zeta_t\tilde\zeta^t_s)^{-\frac{1}{\gamma}}-\pi\right)_+ \,ds\mid \EuScript{F}_t\right].
\label{3eq_60}
\ees
\eqref{3eq_60} now follows, using that $\tilde \zeta^t_s$ is independent of $\EuScript{F}_t$, which means that the only relevant information in $\EuScript{F}_t$ is the initial condition $H_t$ for the ODE giving $H_s$.

Computing $dM_t$ as before we obtain $-\kappa\zeta-t[G+\zeta_tG_y]=\zeta_t(\sigma\theta_t-\kappa)G$, from which \eqref{thetaformula} follows.
\end{proof}

\section{Behaviour of the greedy solution}
\label{S_NR}
Numerical results for the greedy solution will be presented in two parts, first without exogenous pension income ($\pi=0$), and then with $\pi>0$. 

Unless specifically mentioned otherwise, we will use the following parameter values: {\bf  risk-free rate} $r=0.02,$ {\bf volatility} $\sigma=0.16,$ {\bf drift} $\mu=0.08,$  {\bf subjective discount rate} $\rho=0.02$ and {\bf risk aversion} parameter $ \gamma=3$. {\bf Gompertz parameters} will be {\bf age} $x=65$, $m=89.335$, and $b=9.5$ for consistency with \cite{KHS}.

We will vary the parameter $\eta$ that reflects how fast the habit formation model reacts to the client's consumption choices, i.e. the  smoothing factor $\eta$. In our calculations, we will take $\eta=0.01,\; 0.1$ and $1$. 

The dynamics are such that $C^*_t$ and $\theta_t$ are both functions of time $t$, habit $H_t$, and wealth $X_t$. In the plots below, we will examine the dependence on $X_t$, by fixing $H_t=1$ and showing differently coloured curves for five different times. Specifically, $t=0$(green), $10$(blue), $20$(red), $30$(black), and $40$(magenta). 

Note that when $\pi=0$, the $H_t=1$ plots tell the whole story, as the full problem could be reformulated in terms of $C^*_t/H_t$ and $X_t/H_t$. See \cite{Rogers} or \cite{KHS}.
\subsection{Without pension income}
\label{NR_wo_pension}
\begin{figure}[H]\centering
\begin{minipage}[b]{\textwidth} 
\includegraphics[scale=.34]{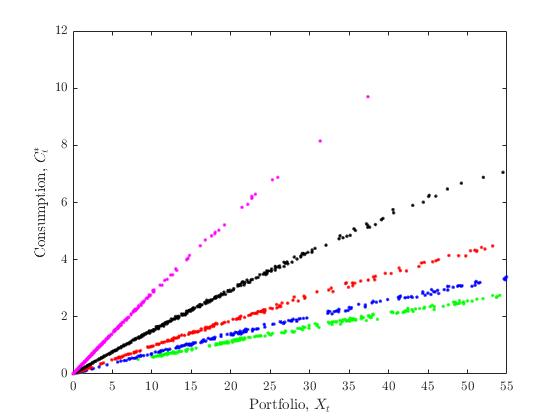}
\includegraphics[scale=.34]{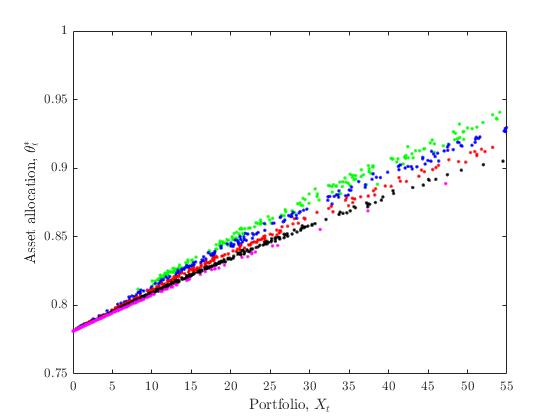}
\end{minipage}
\caption{Greedy solution with $\eta=0.01$, $\pi=0$ and habit $H_t=1,$ for multiple time moments.}
\label{f_1}
\end{figure}
Figure \ref{f_1} corresponds to the smallest value of the smoothing factor $\eta=0.01.$ The client's habit does not adapt to consumption rapidly, therefore this numerical solution will be closest to that of the the case $\eta=0$, which is the classic Merton problem (but with ${}_tp_x$ included in the discount factor). Indeed, the plots on the left are nearly linear, as they would be under Merton. Note the scale of the plots on the right -- there is relatively little change in $\theta_t$, which is reminiscent of the constant allocation we would see under Merton. 

Of course, consumption rises with time, to reflect the shorter horizon till death over which to consume.

%
\begin{figure}[H]\centering
\begin{minipage}[b]{\textwidth} 
\includegraphics[scale=.34]{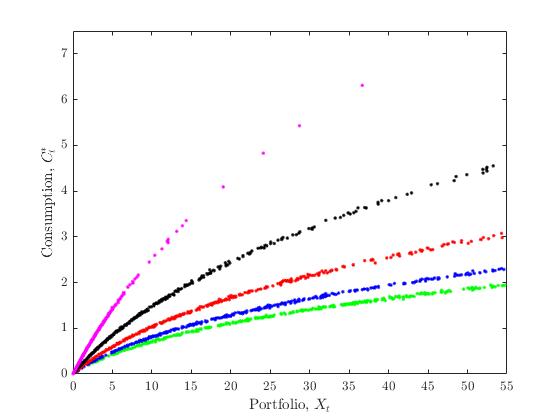}
\includegraphics[scale=.34]{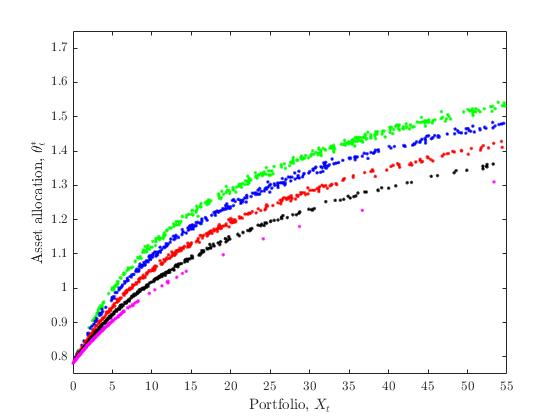}
\end{minipage}
\caption{Greedy solution with $\eta=0.1$, $\pi=0$ and habit $H_t=1,$ for multiple time moments.}\label{f3_2}
\end{figure}
In Figure \ref{f3_2} we take $\eta=0.1$, so the impact of habit should be greater than before, since habit will adapt quicker to consumption. affect results more. Consumption is no longer linearly related to wealth, and asset allocation shows greater dependence on wealth and now falls measurably over time . Consumption and asset allocation grow faster when wealth is small and then gradually level off as wealth becomes large. And comparing scales, we see that consumption is generally lower than for $\eta=0.01$, particularly for advanced ages.

\begin{figure}[H]\centering
\begin{minipage}[b]{\textwidth} 
\includegraphics[scale=.34]{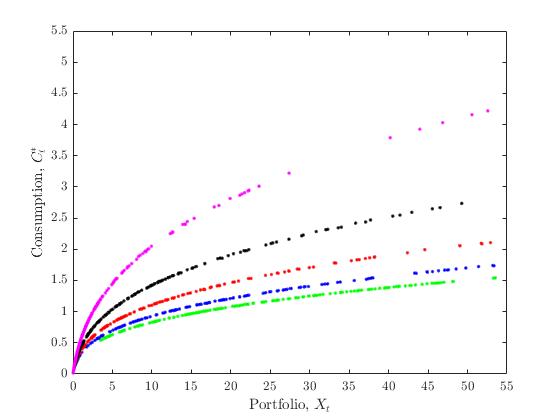}
\includegraphics[scale=.34]{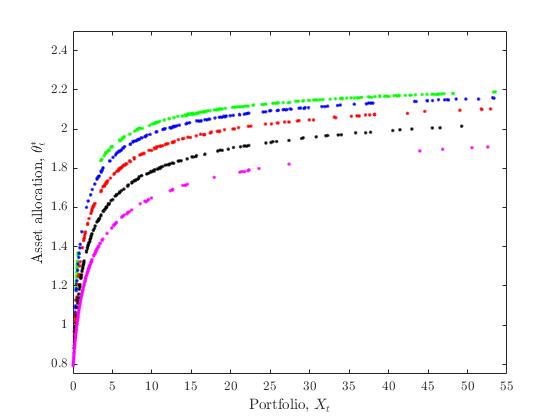}
\end{minipage}
\caption{Greedy solution with $\eta=1$, $\pi=0$ and habit $H_t=1,$ for multiple time moments.}\label{f4}
\end{figure}

Figure \ref{f4} shows the case $\eta=1$, corresponding to habit adapting very rapidly to changes in consumption. Both consumption and asset allocation now rise very rapidly when wealth is small, before levelling off. For large wealth, asset allocation is almost flat, though highly leveraged ($\theta_t>1$). We also see a continuation of the pattern in which rising $\eta$ lowers consumption, but raises asset allocation. 

\subsection{With pension income $\pi$} 
\label{NR_with_pension}
We now incorporate pension income $\pi>0$, and examine how that affects the behaviour of the greedy solution. 

We will focus on the case $\eta=1$, where we see the clearest impact, but the same effects could also be seen at lower levels of $\eta$. We will show two cases, a lower pension of $\pi=0.5$ (Figure \ref{f10_eta1_a}), and a higher pension of $\pi=1.5$ (Figure \ref{f10_eta1_c}). 

With $\pi=0.5$ the shape of the consumption curves is similar to the no-pension case, except that consumption no longer drops to 0 when wealth if very small. Quantitatively, pension naturally acts to raise consumption. Asset allocation flattens off when wealth is large, though at a higher level than when $\eta=0$. And the existence of stable pension income now allows asset allocation to spike up when wealth is small. It is natural that the impact of pension is felt most strongly at low wealth. 

For the larger pension $\pi=1.5$, the above effects are even more accentuated. But we also start to see the impact of imposing the constraint in our problem, that consumption must be $\ge \pi$. In particular, consumption is flat at the pension level when wealth is small. In other words, at small wealth the solution shows no consumption from the wealth pool. Presumably, if we had not imposed this constraint, consumption would have dipped below the pension level for low wealth (in other words, some pension income would have been invested). We believe this is also the source of the more complicated behaviour of the crossing asset allocation curves at low wealth. 

\begin{figure}[H]\centering
\begin{minipage}[b]{\textwidth} 
\includegraphics[scale=.25]{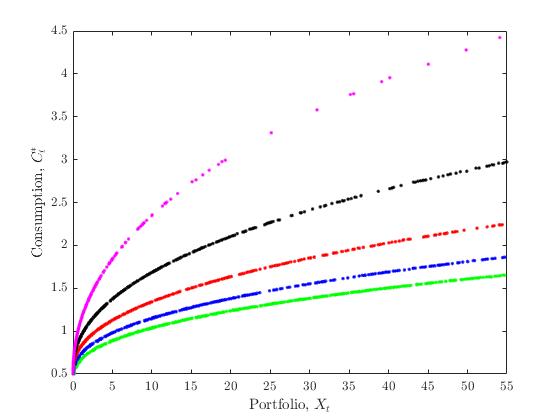}
\includegraphics[scale=.25]{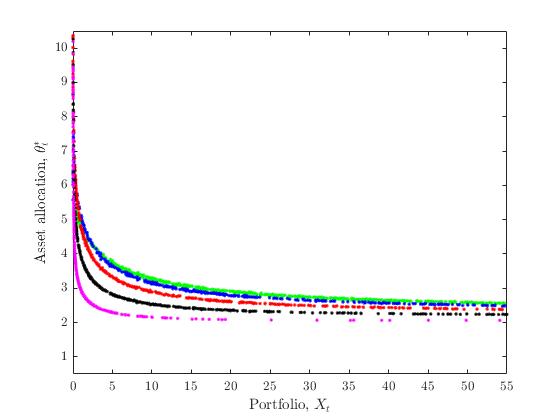}
\centering \subcaption{$\pi=0.5$}
\label{f10_eta1_a}
\end{minipage}
\begin{minipage}[b]{\textwidth} 
\includegraphics[scale=.25]{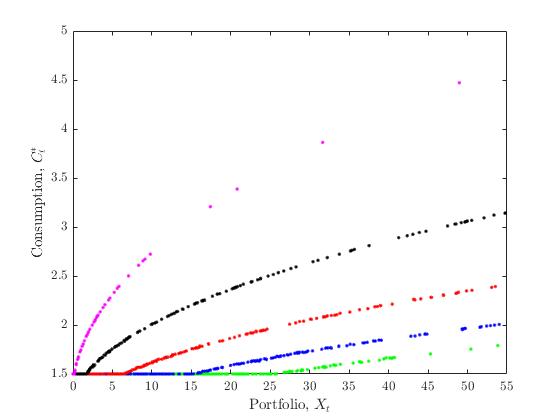}
\includegraphics[scale=.25]{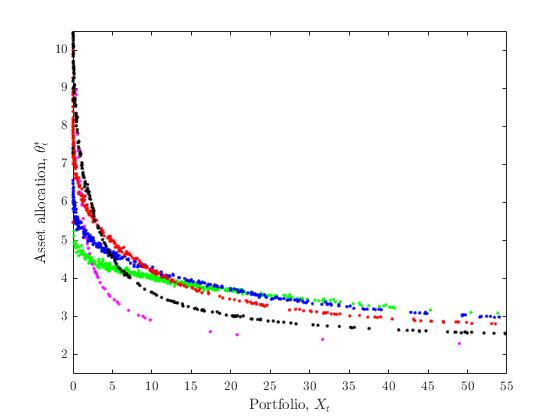}
\centering  \subcaption{$\pi=1.5$}
\label{f10_eta1_c}
\end{minipage}
\caption{Greedy solution with $\eta=1$ and habit $H_t=1,$ for multiple time moments.}
\label{f10_eta1}
\end{figure}
%
\section{Comparison of the Greedy and Optimal solutions.}
\label{comp_P_M}
Having investigated the behaviour of the greedy solution, we now wish to understand when it will be a good approximation to the global optimum (which is difficult to compute using martingale methods / duality). 

We would expect it to be a good approximation when $\eta$ is small (ie when Merton is itself a reasonable approximation), and a poor approximation when $\eta$ is large. So the real question is whether it is useful when $\eta$ is moderate (and when the Merton approximation works poorly)? We will see that it is, at least when wealth is not too large. 

We can compute the global solution using value function / pde methods, though this is taxing when $\pi\neq 0$ due to the lack of scale invariance. But when $\pi=0$ one can use scaling to reduce the dimension of the pde's, so we will carry out the comparison in that case. We refer to an earlier paper \cite{KHS} for the methodology. That paper's main focus was a model with fixed rather than variable asset allocation, but the case of variable $\theta_t$ and $\pi=0$ was also treated there. 

As before, we will consider three choices for $\eta$, namely $\eta=0.01$ (small), $\eta=0.1$ (moderate), and $\eta=1$ (large). 

Figure \ref{c_pde_maf1} shows the case $\eta=0.01$, with the optimal (pde) solution shown as solid curves, and the greedy (Monte Carlo) solution shown as dots. Again, different colours represent different moments of time. 

As expected, the agreement between the two solutions is very good, with the curves fitting the dots quite closely. 

\begin{figure}[H]\centering
\hspace{-2.0cm}
\begin{minipage}[b]{.22\textwidth} 
\includegraphics[scale=.212]{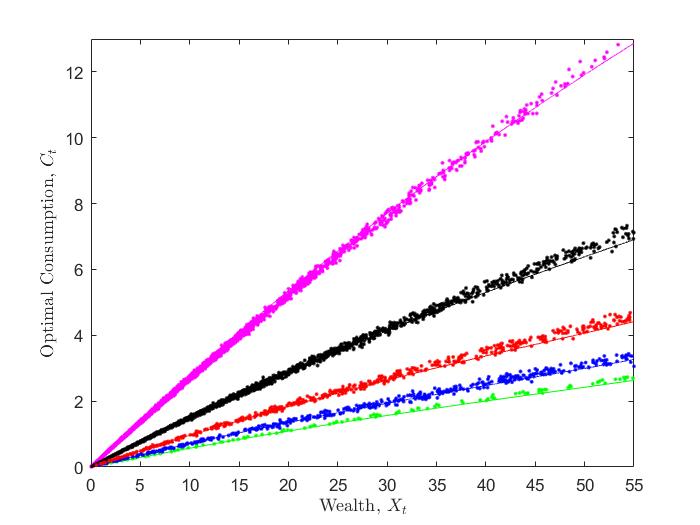}
\label{c_pde_maf1_c}
\end{minipage}
\caption{Comparison between the optimal (solid) and greedy (dots) solutions, for $\eta=0.01$, $\pi=0$, 
normalized by $H_t=1$, at differing times.}
\label{c_pde_maf1}
\end{figure}

Figure \ref{c_pde_maf234} shows the case of moderate smoothing, $\eta=0.1$; Here we saw that the Merton approximation was poor, but it is evident that our greedy approximation is still quite good, at least for wealth between 0 and 10. As wealth grows, there is some divergence between the solutions, with the greedy solution providing more aggressive consumption than optimal. 

This divergence would continue if we projected to greater wealth, which is not unreasonable, since large wealth induces much larger consumption than the habit $H_t=1$ that underlies these pictures. Our conclusion is that for moderate $\eta$ there is quite good agreement between the optimum and our greedy approximation, at least when there is not too great a mismatch between consumption and habit. But that the approximation is at best fair, once consumption and habit differ significantly. 

\begin{figure}[H]\centering
\hspace{-2.0cm}
\begin{minipage}[b]{.27\textwidth} 
\includegraphics[scale=.27]{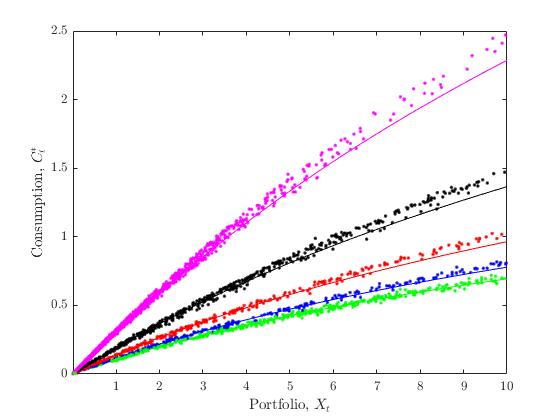}
\label{c_pde_maf234_c}
\end{minipage}
\caption{Comparison between the optimal (solid) and greedy (dots) solutions, for $\eta=0.1$, $\pi=0$, 
normalized by $H_t=1$, at differing times.}
\label{c_pde_maf234}
\end{figure}

Finally, Figure \ref{c_pde_maf5} shows the case $\eta=1$. Here there is only reasonable agreement between the solutions when wealth is quite small. So even when consumption is close to habit (here $\approx 1$), the approximation is not particularly accurate. 

%
\begin{figure}[H]\centering
\hspace{-2.0cm}
\begin{minipage}[b]{.27\textwidth} 
\includegraphics[scale=.265]{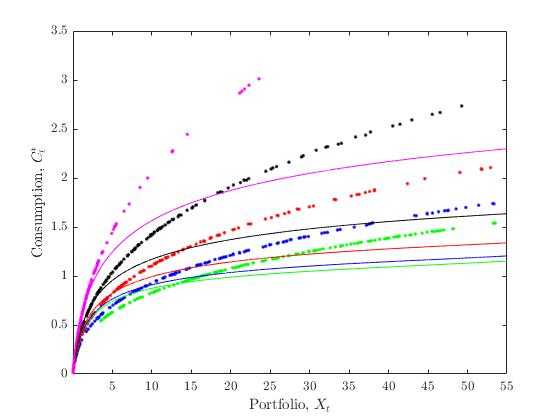}
\label{c_pde_maf5_c}
\end{minipage}
\caption{Comparison between the optimal (solid) and greedy (dots) solutions, for $\eta=1$, $\pi=0$, 
normalized by $H_t=1$, at differing times.}
\label{c_pde_maf5}
\end{figure}
%
\section{Simulations over the lifetime}
\label{WD}
The functional results presented above do not necessarily give one a sense of how consumption and wealth will evolve over the lifecycle. So here we take a simulation of the stock, and show  what that implies for the various quantities over time.

We use the greedy strategy, with the basic investment and mortality parameters used earlier. We will take $\eta=0.1$ so habit has an impact, yet the greedy strategy is close to optimal. We will look at multiple choices for pension level $\pi$, namely $\pi=0$ (black), $\pi=0.5$ (red), $\pi=1$ (blue), $\pi=1.5$ (green) and $\pi=2$ (magenta). 

We will do this for two scenarios, the first with initial habit $\bar c=1$ and initial wealth $v=10$. The second will take $\bar c=5$ and $v=30$. 

Figure \ref{wd_f21} shows consumption, wealth, and asset allocation in the case $\bar c=1$ and $v=10$. We see flat spots in consumption, due to our constraint that consumption cannot fall below pension. Initial consumption from wealth is actually higher for lower pensions, especially when pension exceeds the initial habit. But a higher pension also draws habit up faster, which creates additional consumption from wealth. So in fact, we see higher pension causing more rapid depletion of wealth. For non-zero pension, once wealth is close to exhausted, we also see asset allocation getting large, supported by that steady pension income. 

\begin{figure}[H]\centering
\hspace{-2cm}
\begin{minipage}[b]{.27\textwidth}
\includegraphics[scale=.27]{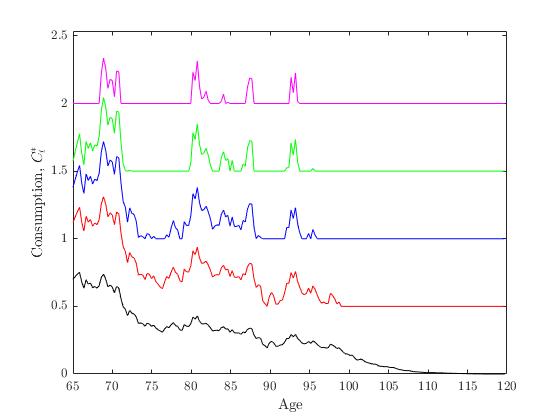} 
\subcaption{}
\label{wd_f21_a}
\end{minipage}
\hspace{0.9cm}
\begin{minipage}[b]{.27\textwidth}
\includegraphics[scale=.27]{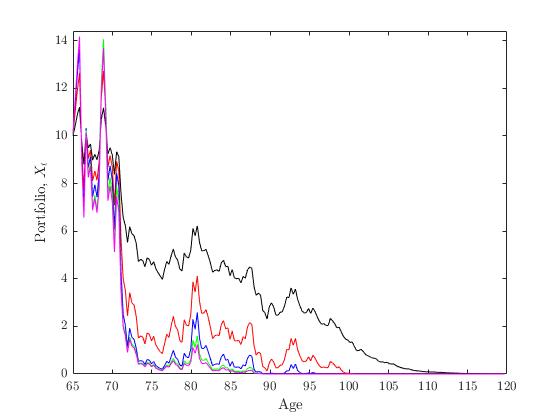} 
\subcaption{}
\label{wd_f21_b}
\end{minipage}
\hspace{0.9cm}
\begin{minipage}[b]{.27\textwidth}
\includegraphics[scale=.27]{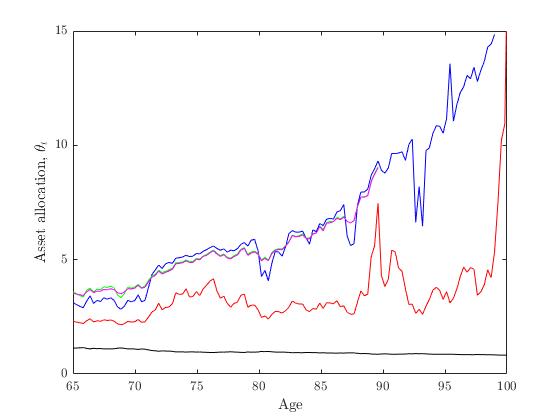} 
\subcaption{}
\label{wd_f21_c}
\end{minipage}
\caption{Lifetime simulations of the greedy strategy with $\eta=0.1$, initial habit $\bar c=1$ and initial wealth $v=10$. }
\label{wd_f21}
\end{figure}

Figure \ref{wd__f22} shows the second scenario, with $\bar c=5$ and $v=30$. Higher initial wealth leads to higher initial consumption, in comparison to the earlier scenario. Also, since initial habit is high, the dynamics will cause habit to drift down over time, which permits consumption to come down from its initially high levels. The combination leads in most cases to wealth persisting for longer than in the earlier scenario. 

\begin{figure}[H]\centering
\hspace{-2cm}
\begin{minipage}[b]{.27\textwidth}
\includegraphics[scale=.27]{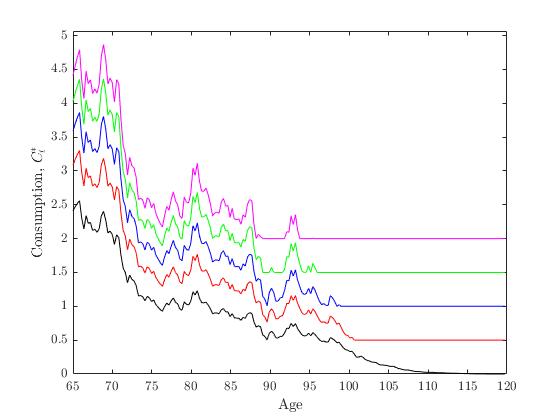} 
\subcaption{}
\label{wd__f22_1}
\end{minipage}
\hspace{0.9cm}
\begin{minipage}[b]{.27\textwidth}
\includegraphics[scale=.27]{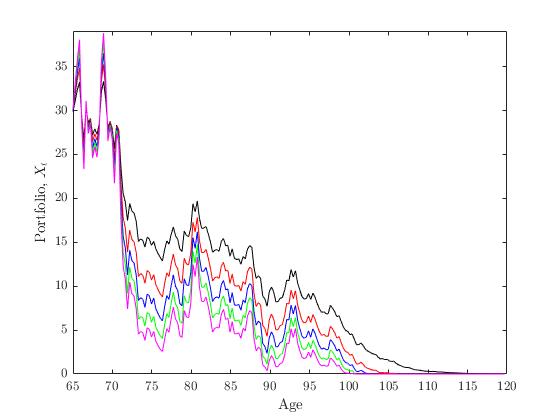} 
\subcaption{}
\label{wd__f22_2}
\end{minipage}
\hspace{0.9cm}
\begin{minipage}[b]{.27\textwidth}
\includegraphics[scale=.27]{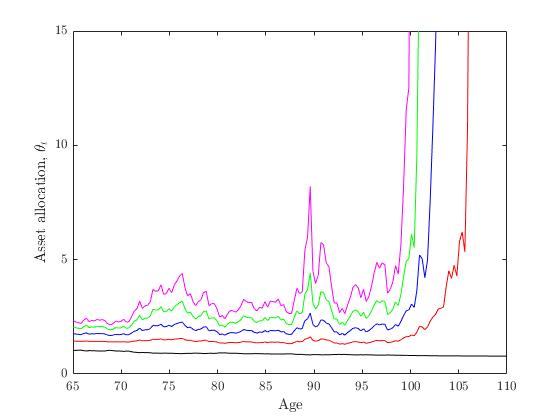} 
\subcaption{}
\label{wd__f22_3}
\end{minipage}
\caption{Lifetime simulations of the greedy strategy with $\eta=0.1$, initial habit $\bar c=5$ and initial wealth $v=30$. }
\label{wd__f22}
\end{figure}

\section{Conclusions}
We have applied martingale methods to a natural but relatively unstudied formulation of the optimal consumption problem with habit formation, in the presence of lifecycle mortality. We find a computationally tractable solution that is approximately optimal when wealth is moderate, and when the rate of reversion of habit to consumption is also moderate. Whereas the classical Merton approximation only works when this reversion rate is small. We then study the dependence of this solution on pension, among other factors. This advances the habit formation literature in the direction of coping with more realistic formulations of utility. 


\end{document}